\newtheorem{theorem}{Theorem}
\newtheorem{lemma}{Lemma}
\newtheorem{remark}{Remark}
\newtheorem{definition}{Definition}
\newtheorem{corollary}{Corollary}
\newtheorem{proposition}{Proposition}
\newcommand{\EE}{\mathbb{E}}
\newcommand{\enc}{\mathsf{c}}
\newcommand{\supp}{\mathsf{supp}}
\begin{document}

\author{
\IEEEauthorblockN{Ashwin~Pananjady~and~Thomas~A.~Courtade}
\IEEEauthorblockA{Department of Electrical Engineering and Computer Sciences\\
			     University of California, Berkeley\\
              Email: \{ashwinpm, courtade\}@eecs.berkeley.edu \vspace{-5ex}}
\thanks{This work was supported in part by the NSF Center for Science of Information CCF-0939370.}
}

\title{\vspace{-9pt} Compressing Sparse Sequences \\under Local Decodability Constraints \vspace{-3pt}}
\maketitle
\begin{abstract}
We consider a variable-length source coding problem subject to local decodability constraints.  In particular, we investigate the blocklength scaling behavior attainable by encodings of $r$-sparse binary sequences, under the constraint that any source bit can be correctly decoded upon probing at most $d$ codeword bits.  We consider both adaptive and non-adaptive access models, and derive upper and lower bounds that often coincide up to constant factors.  Notably, such a characterization for the fixed-blocklength analog of our problem remains unknown, despite considerable research over the last three decades.  Connections to communication complexity are also briefly discussed.

\end{abstract}
\begin{IEEEkeywords}
Variable-Length Compression, Local Decoding, Bit-Probe, Static Membership
\end{IEEEkeywords}

\vspace{-3pt}

\section{Introduction} \label{sec:intro}
Efficient representation of  a sequence of \emph{source bits} by a significantly shorter sequence of \emph{encoded bits} (i.e.,  a \emph{codeword})  is the classical problem of lossless source coding, proposed by Shannon  in his seminal 1948 paper \cite{shannon}. It is widely known that  optimal compression performance can be achieved with  schemes such as Huffman codes \cite{huffman} or the Lempel-Ziv universal compression algorithm \cite{lzcomp}. However, these compression schemes suffer from the drawback that they do not support \emph{local decodability}. Specifically,  retrieving  a single bit of the source sequence generally requires a decoder to access \emph{all} of the encoded bits. 

This is clearly undesirable in applications that favor retrieving selected pieces of information, rather than the entire source sequence. One such application is in bioinformatics \cite{pavlichin2013human, deorowicz2013genome}, in which a DNA sequence is stored as a binary string with relation to a \emph{reference} sequence, with $1$s representing single nucleotide polymorphisms (SNPs) at those positions.
In SNP calling, we are interested in learning whether there is a SNP at position $i$.  Since we are not interested in any other information about the sequence, we would ideally like to accomplish this by accessing  few bits in the compressed representation of the DNA sequence. In this specific instance, local decodability is strongly motivated, since decompressing the whole genome can be prohibitively expensive from a memory standpoint.

Another example presents itself in  the efficient storage of relationships among objects (e.g., relational databases \cite{relational}).  Given a collection of $n$ objects, the relationships among these objects can be represented by an undirected graph on $n$ vertices, with the presence or absence of an edge $(i,j)$ signifying that objects $i$ and $j$ are related or unrelated, respectively (e.g., friendships in social networks). One can think of representing all graphs with $n$ vertices by sequences of $\binom{n}{2}$ bits representing all possible edges. A `$1$' at a given position indicates the presence of that edge, and a `$0$' indicates that it is absent. Thus, testing relationships between objects is accomplished by querying the value of the corresponding bit.  As in SNP calling, it would be ideal to have a compressed representation of the graph which permits such queries upon accessing a small number of encoded bits.

We remark that both applications referred to above involve a source that is inherently sparse -  both SNPs and the number of relationships are small compared to the total length of the sequence.  
Motivated in part by this, our objective in this paper is to analyze the fundamental tradeoffs between access constraints and compressibility of sparse sequences, in the context of locally decodable compression. We  consider a variable blocklength model, in which  source sequences can be mapped to codewords of varying lengths, and that the decoder is informed of the codeword length at the start of the decoding process.

Prior work on the problem of locally decodable source coding includes results on succinct data structures in the \emph{bit probe} \cite{elias, 1993bit} and \emph{cell probe} \cite{cellprobe, yao} complexity models. %
A widely studied problem in the bit-probe model is the static membership problem, which is closely related to the problem we consider. The bit-probe model for the static membership problem encodes subsets $S\subseteq \{1,2, \ldots, n\}$ of size at most $r$ into a data structure of \emph{fixed} length $\ell$, such that queries of the form ``Is $i\in S$" for  $i\in \{1,2, \ldots, n\}$ can be determined by probing (i.e., accessing) at most $d$ bits in the data structure, either adaptively or non-adaptively. Buhrman et al.\ \cite{bitvector} provided the lower bound  $\ell = \Omega(dr^{1-1/d}n^{1/d})$, which remains the best lower bound for a general $n,r,d$. They also showed a scheme that achieves a blocklength $\ell =O(rd'n^{1/d'})$, with $d' =d-\Theta(\log r+\log \log n)$ and $d > \Theta(\log r + \log \log n)$. The interested reader is referred to \cite{nicholson2013survey} for a comprehensive survey on several improvements to these bounds (for specific regimes of $r$ and $d$) that have been proposed in the literature \cite{alon2009power, radhakrishnan2001explicit, viola2012bit, lewenstein2014improved}. Notably, \cite{alon2009power} considered $d\leq 4$ probes and showed that, for $r=o(n)$, a blocklength $\ell = o(n)$ can be achieved by schemes using $3$ adaptive probes or $4$ non-adaptive probes, thereby settling a question posed in \cite{bitvector}. By letting $S$ denote the set of indices where a binary sequence has ones, the static membership problem considered by the bit probe model yields a fixed-blocklength, locally decodable representation of sparse sequences, and can therefore be viewed as a fixed-blocklength counterpart to the problem that we consider. %

Largely independent from the prior work on the bit probe model, locally decodable source coding has also received recent attention from the information theory community \cite{chandar, makhdoumi2013locally, zhou2014simple}. Closely related to the present paper is the recent work by Makhdoumi et al. \cite{makhdoumi2013locally}, which considers the design of locally decodable source codes under the bit probe model for i.i.d.\ Bernoulli sources, with vanishing block-error probability.

Our model differs fundamentally from those appearing in both \cite{bitvector} and \cite{makhdoumi2013locally} since we consider a variable-blocklength setting (to be defined precisely in Section \ref{sec:ProblemSetting}), which has not been previously studied. This  is practically motivated because compressed file size is rarely fixed a-priori by the compression scheme, and file length is often recorded in metadata available to a decompressor.  As we show in the sequel, our analysis of the variable blocklength case allows us to provide tight order-wise bounds on the average blocklength of the code in many cases --- a problem that has remained open for the fixed-blocklength bit-probe problem for over 3 decades. Also, in contrast to \cite{makhdoumi2013locally}, we restrict ourselves exclusively to the \emph{lossless} setting, in which the decoder must have zero (and not vanishing) probability of error, which is motivated by high-fidelity applications such as SNP calling.

\vspace{-3pt}

\subsection*{Our Contributions }

In this paper, we give  upper and lower bounds on the average blocklength attainable by  variable-blocklength compression schemes under local decodability constraints. These bounds are non-asymptotic in nature, and  coincide (up to constant factors) in many cases. As a corollary, we give necessary and sufficient conditions on the number of bit-probes required to achieve competitively optimal compression performance, and briefly comment on connections to communication complexity. %

\section{Notation and Problem Setting}\label{sec:ProblemSetting}
For an integer $k\geq 1$, we employ the shorthand notation    $[k]\triangleq \{1,2,\dots, k\}$.  We make frequent use of the conventional notations $O(\cdot), o(\cdot), \Omega(\cdot), \omega(\cdot), \Theta(\cdot)$.

 Throughout, we  consider encodings of  $r$-sparse binary vectors, which are simply sequences $x^n =(x_1,x_2, \dots, x_n)\in \{0,1\}^n$ having Hamming weight precisely $r$ (we may assume without loss of generality that $r\leq n/2$).  Our restriction to sequences of weight precisely $r$ is primarily for convenience, since our arguments readily generalize to vectors having weight at most $r$.  In some cases, we allow the sparsity parameter $r$ to scale with $n$, in which case we write $r_n$.
 
  The \emph{support} of a source sequence $x^n$ is defined to be the set of nonzero coordinates, i.e. $\mathsf{supp}(x^n) = \{i\in [n]: x_i = 1\}$.  When referring to multiple distinct  sequences, we will use the bracket subscript notation, i.e. $x^n_{(1)}, x^n_{(2)}, \cdots$.

  Letting ${[n]\choose r} \subset \{0,1\}^n$ denote the set of $r$-sparse binary vectors, we assume random vectors $X^n\in {[n]\choose r} $ are drawn uniformly from all ${n \choose r}$ possibilities. 
A source code (i.e., compressor) $\enc$ for $r$-sparse vectors is an invertible mapping $\mathsf{c} \colon {[n]\choose r}  \to \{0,1\}^*$, where $\{0,1\}^*=\{0,1,00,01,10,\dots\}$ denotes the set of all binary strings.  For a codeword $\enc(x^n) = (c_1, c_2, \dots, c_{\ell})$ and a set of integers $S\subset \{1,2,\dots\}$, we let $\enc_S(x^n) = \{c_i : i\in S\}$ denote the coordinates of the codeword indexed by entries in $S$. Letting $\ell(b)$ denote the length of $b\in \{0,1\}^*$, we remark that there are source codes for which the average codeword length is roughly\footnote{Here and throughout, $\log(x)$ denotes the base-2 logarithm of $x$.}  \vspace{-2pt}
\begin{align}
\EE \left[ \ell(\mathsf{c}(X^n))\right] \approx \log \binom{n}{r} \mathsf{~bits}, \label{ITlowerbound}
\end{align}
and this is best-possible, since the entropy of the source $H(X^n) = \log \binom{n}{r}$.  Indeed, the na\"ive scheme which lists the positions of each nonzero entry (requiring approximately $\log n$ bits each) is essentially optimal when $r\ll n$.  However, it is not clear whether such a source code admits a decoding algorithm that, for any specified index $j\in[n]$, can recover bit $x_j$ by probing a \emph{bounded} number of bits in $\enc(x^n)$.    Thus, in the spirit of locally-decodable error-correcting codes \cite{katz2000efficiency} and the data structure counterparts in \cite{bitvector, makhdoumi2013locally}, we define a \emph{variable-length $(r,d,n)$-locally decodable source code}:
\begin{definition}\label{def:LocDec}
A  {$(r,d,n)$-locally decodable} source code, or simply, an $(r,d,n)$ code, consists of a mapping   \vspace{-2pt}
\begin{align}
\enc : {{[n]}\choose r} \to \{0,1\}^*
\end{align}
with the property that, for each $x^n$ and $j\in [n]$, there exists a set $S\subseteq \{1,2,\dots\}$ of size $|S|\leq d$ for which $x_j$ is a function of $\ell(\enc(x^n))$ and $\enc_S(x^n)$.
\end{definition}

In  other words, we can say $\enc$ is a $(r,d,n)$-locally decodable source code only if there exists a corresponding `$(r,d,n)$-local decompressor' --- i.e.,  an algorithm that takes as input a \emph{query index} $j\in [n]$ and the codeword length $\ell({\enc(x^n)})$, and returns the data bit $x_j$ after  accessing at most $d$ bits of $\enc(x^n)$. In light of this, we refer to the number $d$ as an \emph{access constraint} (or, decoding depth), since it bounds the number of encoded bits that the decoder probes before making a determination.  In contrast to the fixed-blocklength settings that have been considered previously (cf. \cite{katz2000efficiency, bitvector, makhdoumi2013locally}), Definition \ref{def:LocDec} does not preclude variable-length encoding schemes. As mentioned above, this is motivated by practice, where data structures are usually of variable length and any access protocol is cognizant of the encoded data's length so that segmentation faults are avoided.  Indeed, in computer file systems, a file is typically accessed after first reading  metadata that describes the location and length of the file.

Note that our definition of an $(r,d,n)$-local decompressor does not distinguish between adaptive or non-adaptive bit probes.  That is, a decompressor can probe entries of ${\enc(x^n)}$ in an adaptive manner (where codeword locations are accessed sequentially, and the positions accessed can depend on the bit values observed during previous probes),  or in a non-adaptive manner (where codeword locations accessed are determined only by the {query index} $j\in [n]$ and the codeword length $\ell({\enc(x^n)})$).  When such a distinction is necessary, we will explicitly refer to adaptive and non-adaptive $(r,d,n)$ codes.

\section{Main Results}\label{sec:MainResults}

\subsection{Bounds on  expected blocklength} \label{sec:explength}
In this section, we present lower and upper bounds on the expected blocklength achievable by variable-length source codes obeying a local decodability constraint, and give sufficient conditions for them to coincide.  Proofs can be found in Section  \ref{sec:proofsketch}.%

\begin{theorem}\label{lb:ad}
The expected codeword length of any $(r,d,n)$-locally decodable code with adaptive  bit-probes satisfies
\begin{align}
   \EE \left[ \ell(\mathsf{c}(X^n))\right] +1&\geq \left(\frac{rd+1}{4e}\right) \left({n \choose r}^{1/(rd+1)}-1\right). \label{nonAsympConverse}
\end{align}
\end{theorem}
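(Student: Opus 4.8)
The plan is to bound, for each integer $\ell$, the number $N(\ell)$ of $r$-sparse sequences whose codeword has length exactly $\ell$, and then to convert a bound of the shape $N(\ell) \lesssim \ell^{rd}$ into the claimed lower bound on $\EE[\ell(\mathsf{c}(X^n))]$. The heart of the argument is an injective ``signature'' lemma: I will attach to each length-$\ell$ source a short partial description of its codeword and show that this description already determines the source, so that $N(\ell)$ is at most the number of such descriptions.

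First I would fix $\ell$ and a source $x^n$ with $\ell(\mathsf{c}(x^n)) = \ell$, writing $w = \mathsf{c}(x^n)$. For each support index $j \in \supp(x^n)$, the adaptive decoder for query $j$ (which depends only on $j$ and the known length $\ell$) follows a root-to-leaf path in its depth-$\le d$ decision tree and outputs $1$; I record the set of (codeword position, observed value) pairs read along this path. Taking the union over all at most $r$ support indices yields a partial assignment $\sigma$ specifying the values of $w$ on a set $Q \subseteq [\ell]$ of at most $rd$ positions. This is exactly where sparsity enters: there are at most $r$ ``yes''-certificates, each of length at most $d$, so $|Q| \le rd$ regardless of $n$.

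Next I would show that $(\ell, \sigma)$ determines $x^n$. The point is that $\supp(x^n)$ equals precisely the set of query indices $j \in [n]$ whose decoder, when simulated using only the recorded values in $\sigma$, never probes outside $Q$ and outputs $1$. Every genuine support index qualifies by construction; conversely, any index whose simulated path stays inside $Q$ reproduces its true execution on $w$ (since $\sigma$ agrees with $w$ on $Q$), so by correctness of the decoder it can output $1$ only if it truly lies in the support. Hence $x^n \mapsto (\ell,\sigma)$ is injective on length-$\ell$ sources, giving $N(\ell) \le \sum_{k=0}^{rd}\binom{\ell}{k}2^k$. Using the termwise inequality $\binom{\ell}{k}2^k \le \binom{2\ell}{k}$ together with the standard tail bound $\sum_{k\le m}\binom{N}{k} \le (eN/m)^m$, this simplifies to $N(\ell) \le (2e\ell/(rd))^{rd}$.

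Finally I would pass to the expected length. Summing the per-length estimate over $\ell \le L$ produces the crucial extra power: $\Pr[\ell(\mathsf{c}(X^n)) \le L] = \binom{n}{r}^{-1}\sum_{\ell \le L} N(\ell) \lesssim \binom{n}{r}^{-1}(L+1)^{rd+1}$, where the exponent $rd+1$ (hence the target exponent $1/(rd+1)$) arises precisely from summing $\ell^{rd}$. Then for any threshold $L$ the crude estimate $\sum_\ell \ell N(\ell) \ge (L+1)(\binom{n}{r} - N(\le L))$ gives $\EE[\ell(\mathsf{c}(X^n))] \ge (L+1)(1 - \Pr[\ell \le L])$; choosing $L$ so that $\Pr[\ell \le L] \le \tfrac12$ yields $\EE[\ell(\mathsf{c}(X^n))] \gtrsim \tfrac{rd}{4e}\binom{n}{r}^{1/(rd+1)}$, and a careful optimization of the threshold produces the stated constant $\tfrac{rd+1}{4e}$ and the $-1$ correction. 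I expect the main obstacle to be the injectivity step, where adaptivity must be handled carefully so that the simulated and genuine executions coincide on exactly the support (no false positives), and where one must recognize that it is sparsity that caps the signature size at $rd$; once $N(\ell) \lesssim \ell^{rd}$ is established, the passage to the closed-form constant is routine.
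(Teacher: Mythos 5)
Your proposal is correct and follows the same skeleton as the paper's proof: both attach to each length-$\ell$ source the union of the at most $rd$ (position, value) pairs its decoder reads when answering the $r$ support queries, bound the number of sources mapped to length-$\ell$ codewords by roughly $\ell^{rd}$, and convert this per-length count into an expected-length bound via a threshold argument. The genuine difference is the combinatorial core. The paper proves that these signatures form an \emph{antichain} (no signature contained in another, else a query $j\in\supp(x^n_{(i)})\setminus\supp(x^n_{(i')})$ would be misanswered on $\enc(x^n_{(i')})$) and invokes the LYM inequality to obtain the per-length bound $\max_{v\le rd}\binom{2\ell}{v}$. You instead prove the weaker but more elementary fact that the signature \emph{determines} the source---your simulation argument correctly excludes false positives, since a simulated path confined to $Q$ reproduces the true execution---and then directly count partial assignments, obtaining $\sum_{v\le rd}\binom{2\ell}{v}$. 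This is larger by a factor of at most $rd+1$, which after summing over $\ell$ and extracting the $(rd+1)$-th root costs only $(rd+1)^{1/(rd+1)}\le 3^{1/3}$, so your route is order-equivalent and dispenses with LYM. Two caveats to attend to in a full write-up: (i) the tail bound $\sum_{k\le m}\binom{N}{k}\le(eN/m)^m$ requires $m\le N$, so lengths $\ell<rd/2$ must be treated separately (the paper does this in the first case of its Proposition \ref{prop:TwoBounds}, absorbing them into the $2^v$ prefactor of Lemma \ref{lem:techEstimate}); and (ii) your median-threshold step combined with the looser per-length count does not automatically reproduce the exact constant $\tfrac{rd+1}{4e}$, so the final optimization must be carried out with your bound in place of \eqref{eqn:KeyConverseLemma}, or a marginally smaller absolute constant accepted.
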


Recalling the identity $\lim_{m\to \infty} m (x^{1/m}-1) = \ln x$, we note that for fixed $n,r$  the lower bound \eqref{nonAsympConverse} becomes
\begin{align}
\lim_{d\to \infty} \left({n \choose r}^{1/(rd+1)}-1\right)\left(\frac{rd+1}{4e}\right) = \frac{1}{4e} \ln {n \choose r}. \label{newConverse}
\end{align}
Hence, we recover the information-theoretic lower bound \eqref{ITlowerbound} (up to constant factors) in the absence of a local decodability constraint.  %
 On this note, an  important consequence of Theorem \ref{lb:ad} is that it  dictates how quickly $d$ must scale with respect to $n,r$ in order to accommodate encoding schemes that are near-optimal in the information-theoretic sense.  
In the next section we quantify this tension more precisely, and establish how large $d$ must be in order to ensure competitive optimality.  Before doing this, we discuss the tightness of \eqref{nonAsympConverse}.

\begin{theorem}\label{ub:nonad}
For any choice of $r,d,n$, there exists a non-adaptive $(r,d,n)$-locally decodable code $\enc$ with average codeword length
\begin{align}
\EE\left[ \ell(\mathsf{c}(X^n))\right] \leq 30 (rd+1)  \left((r+1)^{(r+1)} {n \choose r }\right)^{1/(rd+1)}  \!\!\!\!\!\!  .\label{nonAsympAch}
\end{align}
\end{theorem}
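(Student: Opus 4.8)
The plan is to exhibit an explicit non-adaptive scheme and bound its expected length directly. Since $X^n$ is determined by its support $S=\supp(X^n)\in\binom{[n]}{r}$, it suffices to build a length function and a non-adaptive local decoder that recovers the indicator of ``$j\in S$'' from at most $d$ codeword bits together with the freely available codeword length. Set $t=rd+1$ and choose a ``side length'' $B\approx\left((r+1)^{r+1}\binom{n}{r}\right)^{1/t}$; the aim is a codeword consisting of $\Theta(rd)$ blocks of size $\Theta(B)$, so that the total length is at most $30(rd+1)B$, which is the bound claimed in \eqref{nonAsympAch}.

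The key building block is a single-element ($r=1$) subroutine that already exposes the variable-length gain. Embed $[n]$ into a grid $[B]^{d+1}$ with $B^{d+1}\ge n$, so each position $j$ becomes a tuple $(j_1,\dots,j_{d+1})$. Store $d$ length-$B$ indicator vectors, the $i$-th marking the $i$-th coordinate of the single stored element, and encode the final coordinate into the codeword length via padding. To answer a query I read the final coordinate for free from the length and probe the $i$-th indicator at location $j_i$ for $i\le d$; the element equals $j$ precisely when all $d$ probes return $1$ and the length matches. This uses exactly $d$ non-adaptive probes and blocklength $\Theta\!\left(d\,\binom{n}{1}^{1/(d+1)}\right)$, matching \eqref{nonAsympAch} at $r=1$.

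To handle general $r$ I would give each support element its own group of $d$ probed coordinates, so that the $r$ elements together with the single length coordinate occupy $rd+1$ coordinates, exactly the exponent in \eqref{nonAsympAch}. The difficulty, absent when $r=1$, is that distinct elements may share a coordinate value, creating cross-element false positives that a bit-probe decoder cannot rule out within $d$ probes. I would resolve this by applying a data-dependent embedding of $[n]$ into the grid that separates the $r$ elements of $S$ in every coordinate (a coordinate-distinct placement), and by folding the coarse description needed to route a query to the correct element-block into the codeword length, so that no probes are spent on navigation. The number of admissible placements per support is what inflates the raw count $\binom{n}{r}$ by the factor $(r+1)^{r+1}$, and requiring the grid to have enough cells, namely $B^{rd+1}\gtrsim(r+1)^{r+1}\binom{n}{r}$, is precisely what fixes $B$ at the stated value.

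The main obstacle is exactly this collision-free, navigation-free design: guaranteeing simultaneously that (i) at most $d$ codeword bits are probed, (ii) no non-support index is ever accepted despite shared coordinates, and (iii) the layout is self-describing from the length alone, all while keeping the total length within a constant factor of $(rd+1)B$. Once the worst-case length per support is controlled this way, the remaining step is routine: since every support of weight $r$ receives a codeword of length $O((rd+1)B)$, averaging over the uniform $X^n$ and absorbing the rounding of $B$ and the padding into the constant $30$ yields \eqref{nonAsympAch}. Verifying non-adaptivity throughout --- that every probe location is a function of $j$ and the length only --- completes the argument.
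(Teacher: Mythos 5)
Your $r=1$ gadget is correct and your parameter counting ($rd+1$ coordinates of size $B$ with $B^{rd+1}\gtrsim (r+1)^{r+1}\binom{n}{r}$) is the right heuristic, but the proof has a genuine gap: the entire construction for $r\geq 2$ is deferred. You name the ``collision-free, navigation-free design'' as the main obstacle and then assume it resolved, but that design \emph{is} the theorem. Concretely, two steps would fail as described. First, the length can only carry $\log O(rdB)\approx\log B$ bits, i.e.\ essentially \emph{one} coordinate in $[B]$ --- it cannot simultaneously encode the final grid coordinate of each of the $r$ elements \emph{and} a routing table mapping every query $j\in[n]$ to the correct element-block (or to ``reject''). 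Second, a ``data-dependent embedding'' that separates the $r$ support elements in every coordinate must be known to the decoder, but the only channel available to communicate it is the length; describing a data-dependent injection of $[n]$ into the grid through the length forces the length to blow up far beyond $O(rdB)$. If instead the embedding is fixed per length value, then for a typical support no single admissible length will both separate all $r$ elements and match all their final coordinates, so zero-error decoding with $d$ non-adaptive probes is not guaranteed. The concluding claim that ``the remaining step is routine'' once per-support length is controlled begs the question.

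For comparison, the paper's proof dissolves exactly this obstacle by randomization rather than explicit geometry: for every candidate length $k$ it pre-generates (as shared randomness) a set $S_k\subseteq[n]$ of non-reject indices and, for each $j\in S_k$, a probe set $T_{j,k}\subseteq[k]$ of size $d$; the encoder outputs the codeword of the \emph{smallest} length $k$ whose dictionary is compatible with the support (namely $\supp(x^n)\subseteq S_k$ and no $T_{j,k}$ for $j\in S_k\setminus\supp(x^n)$ is covered by $\cup_{i\in\supp(x^n)}T_{i,k}$), with codeword support $\cup_{i\in\supp(x^n)}T_{i,k}$ and decoding by the AND of the $d$ probed bits. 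Navigation is free because every $j\in S_k$ has its own probe set determined by $k$ alone, false positives are excluded by the compatibility condition rather than by coordinate-disjointness, and the expected length follows from a lower bound on the per-$k$ compatibility probability together with a geometric-type tail estimate (Lemma \ref{lem:BoundIntegral}), followed by derandomization via linearity of expectation. If you want to salvage your explicit-grid route you would need to supply a concrete, length-indexed family of placements with the separation property and prove that a compatible length of size $O((rd+1)B)$ exists \emph{in expectation over the uniform source} --- which is essentially the probabilistic argument you are trying to avoid.
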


Two remarks are in order.  First, we  emphasize that Theorem \ref{lb:ad} is a converse result for adaptive schemes, while Theorem \ref{ub:nonad} is an achievability result for non-adaptive schemes.  We will see shortly that these bounds coincide (up to constant factors) in many cases, showing that adaptivity provides at most constant-factor improvement in these settings.  Second, we note that both Theorem \ref{lb:ad} and Theorem \ref{ub:nonad} are non-asymptotic in nature.  That is, they hold for any choice of parameters $r,d,n$.  However,  results become most crisp when $n\to \infty$, and $r,d$ are functions of $n$.  As a first example, we take $n\to \infty$ and $r,d$ fixed (i.e., not depending on $n$).  In this case, we find that the blocklength of an optimal sequence $\{\enc^{\star}_n\}$ of $(r,d,n)$ codes scales as  
$\EE\left[ \ell(\mathsf{c}_n^{\star}(X^n))\right] = \Theta ( n^{r/(rd+1)})$. 
Hence, when $r,d$ are fixed,  performance scales poorly relative to the information-theoretic lower bound of $\Theta(\log n)$.

As a second example, consider the setting where $r_n=n^{\epsilon}$ and $d_n=\delta \log n$.  Then it is a straightforward calculation using \eqref{nonAsympConverse} and \eqref{nonAsympAch} to see that 
any optimal sequence $\{\enc^{\star}_n\}$ of $(r_n,d_n,n)$-locally decodable codes will satisfy
\begin{align}
C_1  (2^{(1-\epsilon)/\delta} -1)   \leq \frac{\EE\left[ \ell(\mathsf{c}_n^{\star}(X^n))\right]}{\delta n^{\epsilon} \log n }  \leq C_2  2^{1/\delta}   \mbox{~~as $n\to \infty$},\notag
\end{align}
where $C_1$ and $C_2$ are absolute constants.  Thus, up to constant factors, the blocklength scaling behavior  of optimal codes in this regime is $\delta n^{\epsilon}\log n$, and the decoder will probe a fraction of the codeword proportional to $1/r_n$ in worst case. Contrast this with the trivial encoding scheme that simply stores the position of each `1'; the natural decoder based on binary search would require roughly  $\log(r_n) \cdot \log(n)$ probes in worst case.

Similarly, if we parameterize $n_m = {m \choose 2}$, $r_m = (1+\epsilon)\frac{\ln m}{m} {m \choose 2}$ and $d_m = \delta \log m$, then as $m\to \infty$ any optimal sequence $\{\enc^{\star}_m\}$ of $(r_m,d_m,n_m)$-locally decodable codes will satisfy \vspace{-7pt}
\begin{align}
C_1 (2^{1/\delta}-1)   \leq \frac{\EE\left[ \ell(\mathsf{c}_m^{\star}(X^{n_m}))\right]}{r_m d_m}  \leq C_2 2^{2/\delta}.
\end{align}
This particular choice of parameters can be interpreted as encoding a random graph on $m$ vertices with $(1+\epsilon)\frac{\ln m}{m} {m \choose 2}$ edges.  Since $\frac{\ln m}{m}$ is the threshold for connectivity, this graph is connected with high probability for $\epsilon>0$.  Now, querying whether two vertices are connected in this graph corresponds to querying a bit of $X^{n_m}$.  In order to accomplish this in time that grows logarithmically in the number of vertices requires average blocklength of order $r d = \Theta\left( m{\log^2(m)} \right)$.

In the latter two examples,  average blocklength scales  $r_n d_n = \Theta(\log {n \choose r_n })$, which is within constant factors  of the information-theoretic lower bound (i.e., competitively optimal).  In both cases, we chose $r_n d_n= \Omega( \log {n\choose r_n})$ in order to achieve this scaling.  Thus, it is natural to ask:  \emph{do there exist competitively optimal schemes with $r_n d_n = o(\log {n\choose r_n})$?}  The answer to this question is negative, and is the focus of the next section. 
However, before we proceed, we unify the above examples under  the following straightforward corollary of Theorems \ref{lb:ad} and \ref{ub:nonad}:
\begin{corollary}\label{cor:MatchBounds}
If $\log {n \choose r_n} = \Omega(r_n d_n)$ and $d_n=\Omega (\log r_n)$, then any optimal sequence of $(r_n,d_n,n)$-locally decodable codes $\{\enc^{\star}_n\}$ satisfies \vspace{-7pt}
\begin{align}
\EE \left[ \ell(\mathsf{c}^{\star}_n(X^n))\right] = \Theta\left(  r_n d_n  {n \choose r_n}^{1/(r_n d_n+1)} \right) . 
\end{align}
\end{corollary}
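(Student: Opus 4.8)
The plan is to derive Corollary~\ref{cor:MatchBounds} directly from the lower bound \eqref{nonAsympConverse} of Theorem~\ref{lb:ad} and the upper bound \eqref{nonAsympAch} of Theorem~\ref{ub:nonad}, showing that under the stated hypotheses both bounds reduce to $\Theta\!\left(r_n d_n \binom{n}{r_n}^{1/(r_n d_n+1)}\right)$. The lower bound is the cleaner of the two: dropping the additive $+1$ on the left and the $-1$ inside the parenthesis on the right (the latter is a lower-order correction precisely when the exponentiated binomial term is bounded away from $1$), I would argue that $\EE[\ell(\enc^\star_n(X^n))] = \Omega\!\left(r_n d_n \binom{n}{r_n}^{1/(r_n d_n+1)}\right)$. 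The only subtlety is verifying that $\binom{n}{r_n}^{1/(r_n d_n+1)}$ stays bounded away from $1$ (equivalently, that we are not in a degenerate regime); this is where the hypothesis $\log\binom{n}{r_n} = \Omega(r_n d_n)$ enters, since it guarantees $\frac{1}{r_n d_n + 1}\log\binom{n}{r_n} = \Omega(1)$, so the exponentiated term is at least a constant strictly exceeding $1$.

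For the matching upper bound, I would start from \eqref{nonAsympAch} and show that the extra factor $(r_n+1)^{(r_n+1)}$ appearing inside the $(r_n d_n + 1)$-th root contributes at most a constant multiplicative factor relative to $\binom{n}{r_n}$ under the same root. Taking logarithms, the exponent on this factor is $\frac{(r_n+1)\log(r_n+1)}{r_n d_n + 1}$, which I would bound by a constant using the hypothesis $d_n = \Omega(\log r_n)$: this makes the denominator $r_n d_n$ grow at least as fast as $r_n \log r_n \approx (r_n+1)\log(r_n+1)$, so the whole exponent is $O(1)$ and hence $\left((r_n+1)^{(r_n+1)}\right)^{1/(r_n d_n+1)} = \Theta(1)$. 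Consequently \eqref{nonAsympAch} simplifies to $O\!\left(r_n d_n \binom{n}{r_n}^{1/(r_n d_n+1)}\right)$, matching the lower bound up to constants.

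Combining the two displays gives the claimed $\Theta$, and since an optimal sequence $\{\enc^\star_n\}$ by definition attains the least possible expected blocklength, it is sandwiched between the universal lower bound of Theorem~\ref{lb:ad} and the blocklength of the explicit non-adaptive construction of Theorem~\ref{ub:nonad}, so its expected length is pinned to the common order. The main obstacle I anticipate is not any single estimate but rather being careful about which hypothesis kills which stray factor: the condition $\log\binom{n}{r_n} = \Omega(r_n d_n)$ is exactly what keeps the dominant term $\binom{n}{r_n}^{1/(r_n d_n+1)}$ genuinely larger than a constant (so the $-1$ correction in the converse is negligible), whereas $d_n = \Omega(\log r_n)$ is exactly what absorbs the combinatorial overhead $(r_n+1)^{(r_n+1)}$ in the achievability bound. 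Keeping these two roles distinct, and confirming that both simplifications hold simultaneously under the joint hypotheses, is the crux; the remaining arithmetic is routine manipulation of logarithms and the elementary inequality $m(x^{1/m}-1) \le \ln x \cdot x^{1/m}$ type bounds relating the root form to the logarithmic form.
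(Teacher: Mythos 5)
Your proposal is correct and follows exactly the intended derivation: the paper presents this as a direct consequence of Theorems \ref{lb:ad} and \ref{ub:nonad}, and your accounting of which hypothesis neutralizes which stray factor (the entropy condition $\log\binom{n}{r_n}=\Omega(r_n d_n)$ keeps $\binom{n}{r_n}^{1/(r_n d_n+1)}$ bounded away from $1$ so the additive corrections in the converse are absorbed, while $d_n=\Omega(\log r_n)$ makes $(r_n+1)^{(r_n+1)/(r_n d_n+1)}=O(1)$ in the achievability bound) is precisely the right bookkeeping. No further comparison is needed.
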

\subsection{Local Decodability and Competitive Optimality} \label{sec:ldco}

We now focus on the question raised at the end of the previous section, and give necessary conditions for competitive optimality (proofs can be found in the supplementary material). To this end, we define:

\begin{definition}%
For a sequence of integers $\{r_n\}_{n\geq 1}$ a  sequence of encoders \vspace{-10pt}
\begin{align}
\enc_n : {{[n]}\choose r_n}\to \{0,1\}^*~~~~n\geq 1
\end{align}
 is said to be \emph{competitively optimal} if %
\begin{align}
\limsup_{n\to \infty} \frac{\EE[\ell(\enc_n(X^n))]}{\log \binom{n}{r_n}} = O(1).
\end{align}
\end{definition}
In other words, competitively optimal schemes attain compression rates within a constant factor of the information theoretic lower bound $\log {n \choose r_n}$ for large enough $n$.%

From Theorem \ref{lb:ad}, it is possible to deduce the following necessary condition for competitive optimality:
\begin{theorem}\label{thm:compOptimalNecessary}
If $\{\enc_n\}$ is a competitively optimal sequence of $(r_n,d_n,n)$-locally decodable codes, then  
$r_n d_n = \Omega \left( \log {n \choose r_n}\right)$.%
\end{theorem}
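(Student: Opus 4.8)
The plan is to derive the bound directly by pitting the adaptive converse of Theorem~\ref{lb:ad} against the competitive-optimality hypothesis. Write $L_n \triangleq \log\binom{n}{r_n}$ and $m_n \triangleq r_n d_n + 1$. Since $1\le r_n\le n/2$ we have $\binom{n}{r_n}\ge n$, hence $L_n\to\infty$; this fact will let me discard additive constants uniformly at the end. Competitive optimality asserts that $\limsup_n \EE[\ell(\enc_n(X^n))]/L_n$ is finite, so there exist a constant $K<\infty$ and an index $N$ with $\EE[\ell(\enc_n(X^n))]\le K L_n$ for all $n\ge N$. On the other hand, Theorem~\ref{lb:ad} gives $\EE[\ell(\enc_n(X^n))]+1\ge \frac{m_n}{4e}\bigl(\binom{n}{r_n}^{1/m_n}-1\bigr)$. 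Chaining these two estimates is the whole game.

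The key reduction is to observe that both sides organize around the single scalar $t_n \triangleq L_n/m_n$. Indeed $\binom{n}{r_n}^{1/m_n}=2^{L_n/m_n}=2^{t_n}$ and $m_n = L_n/t_n$, so the converse lower bound equals $\frac{L_n}{4e}\cdot\frac{2^{t_n}-1}{t_n}$. Combining with the upper bound $K L_n$ and dividing by $L_n$ produces, for all $n\ge N$ large enough that $L_n\ge 1$,
\[
\frac{2^{t_n}-1}{t_n}\;\le\; 4e\Bigl(K+\tfrac{1}{L_n}\Bigr)\;\le\; 4e(K+1).
\]
Thus the entire statement collapses to a one-variable inequality controlling $t_n$.

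It then remains to invoke the monotonicity of $g(t)\triangleq(2^t-1)/t$ on $(0,\infty)$: writing $g(t)/\ln 2 = \frac1t\int_0^t 2^s\,ds$ as the average over $[0,t]$ of the increasing function $s\mapsto 2^s$ shows $g$ is increasing, and $g(t)\ge 2^{t-1}/t$ for $t\ge1$ shows $g(t)\to\infty$. Hence $g(t_n)\le 4e(K+1)$ forces $t_n\le T$ for the unique $T$ with $g(T)=4e(K+1)$. Unwinding $t_n=L_n/m_n\le T$ yields $r_n d_n = m_n-1\ge L_n/T-1$, and since $L_n\to\infty$ the additive $1$ is negligible, giving $r_n d_n=\Omega(L_n)=\Omega\bigl(\log\binom{n}{r_n}\bigr)$ as claimed. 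I do not expect a serious obstacle here: the only conceptual step is the change of variable to $t_n$ that exposes the one-variable function $g$, after which everything is elementary. The one point requiring care is the uniform handling of constants — using $L_n\to\infty$ to absorb both the $1/L_n$ term and the final $-1$ — which is why establishing $L_n\to\infty$ at the outset is worthwhile.
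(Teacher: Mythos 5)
Your proposal is correct and follows essentially the same route as the paper: the paper sets $\epsilon_n = (r_nd_n+1)/\log\binom{n}{r_n}$ (your $1/t_n$), chains the Theorem~\ref{lb:ad} converse against the competitive-optimality bound to get $\epsilon_n(2^{1/\epsilon_n}-1)\le K$, and concludes $\epsilon_n$ is bounded away from zero because $\epsilon(2^{1/\epsilon}-1)\nearrow\infty$ as $\epsilon\searrow 0$ --- which is exactly your one-variable function $g(t_n)=(2^{t_n}-1)/t_n$ being increasing and unbounded. Your version is just slightly more explicit about absorbing the $4e$ factor and the additive constants.
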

 In other words, we cannot expect to attain competitive optimality when $r_n$ and $d_n$ are simultaneously small relative to the source entropy 
(note the contrast to the sufficient conditions in Corollary \ref{cor:MatchBounds}).  
This relationship can be somewhat complicated since the source entropy generally depends on both $n$ and $r_n$.  However, when the source sequence  is modestly sparse (i.e., $r_n = O(n^{1-\epsilon})$ for some $\epsilon>0$), then the explicit dependence on $r_n$ in Theorem \ref{thm:compOptimalNecessary} %
can be eliminated to obtain the following condition:
\begin{corollary}\label{cor:CompOptimalSparse}
If $r_n = O(n^{1-\epsilon})$ for some $\epsilon>0$, then there exists a competitively optimal sequence of $(r_n,d_n,n)$-locally decodable codes if and only if $d_n = \Omega (\log n)$.
\end{corollary}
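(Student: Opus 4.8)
The plan is to reduce both implications to a single estimate: under the modest-sparsity hypothesis $r_n = O(n^{1-\epsilon})$, the source entropy satisfies $\log\binom{n}{r_n} = \Theta(r_n \log n)$. This follows from the standard bounds $(n/r_n)^{r_n} \le \binom{n}{r_n} \le (en/r_n)^{r_n}$, which give $\log\binom{n}{r_n} = \Theta\!\big(r_n \log(n/r_n)\big)$, combined with the observation that $r_n = O(n^{1-\epsilon})$ forces $n/r_n = \Omega(n^{\epsilon})$ and hence $\log(n/r_n) = \Theta(\log n)$. I would establish this estimate first, as it is the only place where the sparsity assumption is used.

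For the necessity of the condition (existence of a competitively optimal sequence $\Rightarrow d_n = \Omega(\log n)$), I would invoke Theorem~\ref{thm:compOptimalNecessary} directly: any competitively optimal sequence of $(r_n,d_n,n)$ codes satisfies $r_n d_n = \Omega\!\big(\log\binom{n}{r_n}\big)$. Substituting $\log\binom{n}{r_n} = \Theta(r_n\log n)$ gives $r_n d_n = \Omega(r_n \log n)$, and dividing through by $r_n > 0$ yields $d_n = \Omega(\log n)$. This direction is essentially immediate once the entropy estimate is in hand.

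For sufficiency ($d_n = \Omega(\log n) \Rightarrow$ existence of a competitively optimal sequence), the key observation is that a code of depth $d' \le d_n$ is \emph{a fortiori} an $(r_n,d_n,n)$ code, so I am free to work at a reduced depth. Writing $d_n \ge c\log n$ for large $n$, I would set $d_n' = \lfloor c\log n\rfloor = \Theta(\log n)$ and apply Theorem~\ref{ub:nonad} at depth $d_n'$; this reduction is essential, since using $d_n$ itself could make the prefactor $\Theta(r_n d_n)$ far exceed the target $\Theta(r_n \log n)$ when $d_n \gg \log n$. It then remains to check that $30(r_n d_n' + 1)\big((r_n+1)^{r_n+1}\binom{n}{r_n}\big)^{1/(r_n d_n'+1)} = O\!\big(\log\binom{n}{r_n}\big)$, where the prefactor is already $\Theta(r_n d_n') = \Theta(r_n\log n)$, matching the goal.

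The main obstacle, and the only genuinely delicate step, is controlling the exponentiated factor, since the achievability bound of Theorem~\ref{ub:nonad} carries the extra $(r_n+1)^{r_n+1}$ term absent from the entropy. I would take logarithms: the exponent equals $\big((r_n+1)\log(r_n+1) + \log\binom{n}{r_n}\big)/(r_n d_n' + 1)$. The point is that $(r_n+1)\log(r_n+1) = O(r_n\log n)$ (since $r_n \le n$) and $\log\binom{n}{r_n} = \Theta(r_n\log n)$ are both of the \emph{same} order as the denominator $r_n d_n' + 1 = \Theta(r_n\log n)$, so the exponent is $\Theta(1)$ and the factor is $2^{O(1)} = O(1)$. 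It is precisely the modest-sparsity regime that guarantees this matching of orders, which is why the argument, and the clean ``if and only if'' statement, is restricted to $r_n = O(n^{1-\epsilon})$. Multiplying the $\Theta(r_n\log n)$ prefactor by the $O(1)$ factor then gives $\EE[\ell(\enc_n(X^n))] = O(r_n\log n) = O\!\big(\log\binom{n}{r_n}\big)$, establishing competitive optimality and completing the proof.
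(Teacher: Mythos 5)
Your proposal is correct and follows essentially the same route as the paper: necessity via Theorem~\ref{thm:compOptimalNecessary} combined with $\log(n/r_n)=\Omega(\log n)$, and sufficiency by first reducing the probe depth to $\Theta(\log n)$ and then applying Theorem~\ref{ub:nonad}, checking that the exponentiated factor is $O(1)$ in this regime. Your treatment is if anything slightly more explicit than the paper's, since you spell out why the $(r_n+1)^{(r_n+1)}$ term in the achievability bound is also absorbed into the $O(1)$ factor.
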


In contrast to Corollary \ref{cor:CompOptimalSparse}, if $r_n=\Theta(n)$, the information theoretic lower bound is $\log {n \choose r_n} = \Theta(n)$, and the identity encoding $\enc_n(x^n) = x^n$ is competitively optimal, with all source bits being decodable with $d_n=1$ probes.

\section{Proof Sketches for Main Results}\label{sec:proofsketch}
Due to space constraints, we only sketch the proofs of Theorems \ref{lb:ad} and \ref{ub:nonad}.  Details are deferred to the supplementary material along with the proofs for Section \ref{sec:ldco}. 

 \vspace{2pt}

\begin{IEEEproof}[Proof Sketch of  Theorem \ref{lb:ad}]
Our proof is inspired by that of \cite[Theorem 6]{bitvector}. Let $\enc$ be a $(r,d,n)$-locally decodable code. 
For a source sequence $x^n$, let $\enc_q (x^n)$ denote the $q$th coordinate of the codeword $\enc (x^n)$, and define the set
\begin{align}
T^k_i \triangleq &\big\{\big(q, \enc_q (x^n_{(i)})\big): \ell(\enc(x^n_{(i)})) = k, \text{and location $q$ of $\enc(x^n_{(i)})$} \nonumber \\ 
&\text{~~is accessed to determine $x_j$ for some $j \in \supp(x^n_{(i)})$\}},\nonumber
\end{align}
where we have abused notation slightly by letting $x_j$ denote the $j$th coordinate of sequence $x^n_{(i)}$.  
Note that each $T_i^k$ is a subset of $[k] \times \{0,1\}$ of size at most $rd$, since $|\supp(x^n_{(i)})|=r$ and the decoder makes at most $d$ probes in response to a query. Also note that for $i \neq i'$, $T_i^k \not \subseteq T_{i'}^k$. To see this, assume the contrary, that $T_i^k \subseteq T_{i'}^k$ for some $i\neq i'$. Let the encoded source word be $x^n_{(i')}$. If we now query the value of $x_j$ for $j \in \supp(x^n_{(i)}) \setminus \supp(x^n_{(i')})$, we see that the decoder will make an error, establishing  the contradiction.

Since for fixed $k$, the $T^k_i$s are not subsets of one another, an application of the LYM inequality \cite{LYM} yields 
\begin{align}
\#\{i : \ell(\enc(x^n_{(i)})) = k\} \leq \max_{v \leq rd} \binom{2k}{v}~~\mbox{for each $k$.} \label{maxCodewords}
\end{align}
In light of \eqref{maxCodewords}, the average codeword length must satisfy
\begin{align}
\EE \left[ \ell(\mathsf{c}(X^n))\right] &\geq \sum_{k=1}^{M(n,r,d)} k \frac{ \max_{v \leq rd} \binom{2k}{v}  }{{n \choose r}}, \label{exLBgreedy}
\end{align}
where $M(n,r,d)$ is the largest integer satisfying
\begin{align}
 \sum_{k=1}^{M(n,r,d)+1} \max_{v \leq rd} \binom{2k}{v} > {n \choose r} &\geq \sum_{k=1}^{M(n,r,d)} \max_{v \leq rd} \binom{2k}{v} . \label{firstInequality}
\end{align}
Now define the  probability distribution  \vspace{-2pt}
\begin{align}
Q(k) =  \frac{ \max_{v\leq rd}{{2k} \choose v}  }{{n \choose r}} ~~~\mbox{for $1\leq k \leq M(n,r,d)$}  \vspace{-2pt}
\end{align}
and $Q(M(n,r,d)+1 ) = 1-\sum_{k=1}^{M(n,r,d)}Q(k)$.  Since $Q(k) \leq Q(k+1)$ for $k < M(n,r,d)$ by definition, we can conclude  \vspace{-3pt}
\begin{align}
\EE \left[ \ell(\mathsf{c}(X^n))\right] &\geq \sum_{k=1}^{M(n,r,d)+1} k \cdot Q(k) \geq \frac{M(n,r,d)+1}{2}.\label{lowerBoundByMax}  \vspace{-2pt}
\end{align}
Toward evaluating \eqref{lowerBoundByMax},  we need the following  technical estimate, which is proved in the supplementary material. \vspace{-2pt}
\begin{lemma}\label{lem:techEstimate}
For all $M,v\geq 1$,  \vspace{-2pt}
\begin{align}
\sum_{k=1}^M \max_{i \leq v} {{2k} \choose i } \leq 
2^{v} \frac{\left(M+2 + \frac{v+1}{2e}\right)^{v+1}}{(v+1)!} .\label{eqn:KeyConverseLemma}%
\end{align}
\end{lemma}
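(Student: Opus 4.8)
The plan is to bound the sum $\sum_{k=1}^M \max_{i\le v}\binom{2k}{i}$ by first understanding the inner maximum and then comparing the sum to an integral. I would begin by analyzing $\max_{i\le v}\binom{2k}{i}$. Since $\binom{2k}{i}$ is increasing in $i$ on the range $i\le k$, and we are maximizing over $i\le v$, the maximizer is $i=\min(v,k)$. For the regime that dominates the sum (large $k$), the binomial coefficient is increasing up to the midpoint, so the maximum over $i\le v$ is achieved at $i=v$ whenever $v\le k$, giving $\max_{i\le v}\binom{2k}{i}=\binom{2k}{v}$. I would handle the small-$k$ terms (where $k<v$) separately or absorb them into the bound, since they are comparatively negligible.

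Next I would bound the individual term $\binom{2k}{v}$. Using the standard estimate $\binom{2k}{v}\le \frac{(2k)^v}{v!}\le \frac{2^v k^v}{v!}$, the sum becomes
\begin{align}
\sum_{k=1}^M \max_{i\le v}\binom{2k}{i} \le \frac{2^v}{v!}\sum_{k=1}^M k^v. \notag
\end{align}
The key step is then to bound $\sum_{k=1}^M k^v$. I would compare this sum to the integral $\int_0^{M+1}t^v\,dt = \frac{(M+1)^{v+1}}{v+1}$, or more carefully use an Euler–Maclaurin-type shift. Dividing by $v+1$ converts $\frac{1}{v!}$ into $\frac{1}{(v+1)!}$, which matches the target denominator, so the overall shape $\frac{2^v}{(v+1)!}(\cdots)^{v+1}$ emerges naturally.

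The main obstacle will be getting the base of the power to be exactly $M+2+\frac{v+1}{2e}$ rather than something slightly larger; this requires a careful, quantitatively tight bound on $\sum_{k=1}^M k^v$ and on the gap between the discrete sum and the integral. A clean way to control $\sum_{k=1}^M k^v \le \frac{(M+1)^{v+1}}{v+1}$ is immediate from $k^v \le \int_k^{k+1} t^v\,dt$, but this alone may not furnish the precise additive $\frac{v+1}{2e}$ correction term. I expect the extra slack comes from more careful handling: either accounting for the small-$k$ terms where the maximizer is $i=k$ instead of $i=v$, or applying a sharper estimate such as $\sum_{k=1}^M k^v \le \frac{(M+1/2)^{v+1}}{v+1}$ combined with the inequality $(1+x)^{v+1}\le e^{(v+1)x}$ to introduce the factor involving $e$. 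I would aim to show that the correction term $\frac{v+1}{2e}$ absorbs both the boundary discrepancy in the integral comparison and the lower-order terms I dropped when bounding the binomial coefficient, and then verify the resulting inequality holds for all $M,v\ge 1$ by checking the edge cases directly.
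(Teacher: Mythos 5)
There is a genuine gap, and it sits exactly where you wave your hands at the small-$k$ terms. The displayed inequality $\sum_{k=1}^M \max_{i\le v}\binom{2k}{i} \le \frac{2^v}{v!}\sum_{k=1}^M k^v$ is false as written: for $k<v$ the inner maximum is the central coefficient $\binom{2k}{k}$, which is not bounded by $(2k)^v/v!$ (take $k=1$, $v=10$: the left side is $2$, the right side is $2^{10}/10!\approx 3\cdot 10^{-4}$). More importantly, these terms are not ``comparatively negligible.'' When $M\le v$, the \emph{entire} sum equals $\sum_{k=1}^M\binom{2k}{k}$, which grows like $4^M$, and this is the hard regime of the lemma: it is precisely where the additive correction $\frac{v+1}{2e}$ is needed, since after taking $(v+1)$-th roots the factor $((v+1)!)^{1/(v+1)}\approx (v+1)/e$ must be cancelled by the base of the power even when $M$ is small compared to $v$. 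Your guess that the correction absorbs the boundary discrepancy of the integral comparison points in the wrong direction: for $M\ge v+1$, the argument you outline (bound $\binom{2k}{v}\le (2k)^v/v!$ and compare the sum with $\int_0^{M+2}z^v\,dz$) is essentially the paper's, and there it already yields the stronger bound $2^v(M+2)^{v+1}/(v+1)!$ with no correction term at all.

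What is missing is a self-contained treatment of $1\le M\le v$. The paper proves by induction that $\sum_{k=1}^M\binom{2k}{k}\le\frac{3}{2}\binom{2M}{M}$, and then establishes $\frac{3}{2}\binom{2M}{M}\le 2^v\bigl(M+2+\frac{v+1}{2e}\bigr)^{v+1}/(v+1)!$ by taking $(v+1)$-th roots, upper-bounding the left side by the convex function $3^{1/(v+1)}\frac{v+1}{e}(2e)^{M/(v+1)}$ of $M$, and checking the endpoints $M=0$ and $M=v+1$. Unless you supply an argument of this kind for $M\le v$ --- or actually verify that your slack term covers a quantity of order $4^M$ there, which your write-up does not do --- the proof is incomplete. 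Your large-$M$ half, by contrast, matches the paper's and is fine.
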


Identifying $M\leftarrow M(n,r,d)+1$ and $v\leftarrow rd$ in \eqref{eqn:KeyConverseLemma}, the first inequality in \eqref{firstInequality} can be rearranged to conclude  \vspace{-2pt}
\begin{align}
   M(n,r,d)+3  &\geq \left({n \choose r}^{1/(rd+1)}- 1 \right)\left(\frac{rd+1}{2e}\right).%
\end{align}
Recalling \eqref{lowerBoundByMax} proves the desired inequality.
\end{IEEEproof}

 \vspace{2pt}

\begin{IEEEproof}[Proof Sketch of Theorem \ref{ub:nonad}]
The proof is by a random coding argument, but  it is important to note that standard typicality arguments are not applicable here since they  do not support local decodability.  Briefly, the  idea behind our encoding scheme is to first encode some information about $\mathsf{supp}(\enc(x^n))$ into the codeword length, and then carefully encode the remaining information so that bit $x_j$ can be recovered by computing the binary AND of $d$ encoded bits.  A precise description of the codebook generation and decoding procedure is given below. An example and the analysis are postponed to Appendix \ref{app:thm2}.

{\bf Codebook Construction:}  For $k=rd+1, rd+2, \dots$ choose a subset $S_k\subseteq [n] $ of size $\frac{r}{r+1}\frac{\binom{k}{d}}{\binom{rd}{d}}$ uniformly at random from all such subsets\footnote{Floor and ceiling operators are omitted  for clarity of presentation.}.  For each $j\in S_k$, choose a subset $T_{j,k} \subseteq [k]$ of size $d$ independently and uniformly from all such subsets.  All subsets are made available to both encoder and decoder.

For a sequence $x^n\in {[n] \choose r}$, let $k(x^n)$ denote the smallest integer $k$ such that  the following two conditions hold:
\begin{enumerate}[(C1)]
\item  $\mathsf{supp}(x^n)\subseteq S_k$; and 
\item  $T_{j,k}\not\subseteq \cup_{ i \in \mathsf{supp}(x^n) } T_{i,k}$ for all $j \in S_k \setminus \mathsf{supp}(x^n)$.
\end{enumerate}

{\bf Encoding procedure:}  
A sequence $x^n \in {[n] \choose r}$ is encoded to a codeword $\enc(x^n)$ of length $k(x^n)$ satisfying \vspace{-2pt}
\begin{align}
\mathsf{supp}(\enc(x^n)) =  \cup_{ i \in \mathsf{supp}(x^n) } T_{i,k(x^n)}.
\end{align}
In other words, $x^n$ is encoded to a vector of length $k(x^n)$, which has 1's in all positions $j\in T_{i,k(x^n)}$ if and only if $x_i =1$. 

{\bf Decoding procedure:}  On observing the length of codeword $\enc(x^n)=(c_1, c_2, \dots, c_{\ell(\enc(x^n))})$, determine bit $x_j$ as follows:
\begin{enumerate}[(1)]
\item  If $j \notin  S_{\ell(\enc(x^n))}$, declare $x_j=0$; else 
\item  If $j \in  S_{\ell(\enc(x^n))}$, declare $x_j= \wedge_{i \in T_{j,k(x^n)}} c_i$ , where `$\wedge$' denotes binary AND.
\end{enumerate}

By the nature of the codebook construction, it is clear that the decoder (i) will never make an error; and (ii)  satisfies the non-adaptive $d$-local decodability constraint. The analysis of the average codeword length is omitted due to space constraints, and can be found in Appendix \ref{app:thm2}.  
\end{IEEEproof}

\section{Concluding remarks}\label{sec:conclusion}
We provided bounds for the blocklength scaling behaviour of $(r,d,n)$ locally-decodable codes that are order-wise tight for many regimes of $r$, $d$, and $n$, although determining the tight constant in these bounds is still an open problem. We also showed that in contrast to the fixed blocklength setting (cf. \cite{alon2009power}), adaptivity of probes provides no essential advantage in our regime of variable length source coding. 
In conclusion, we mention two variations on our main results:
\subsection{Compression with block errors}
In  \cite{makhdoumi2013locally}, the authors allow for vanishing block-error probability in decoding.  Although we only considered error-free encodings, the proof of Theorem \ref{lb:ad} readily extends to incorporate block-error probability as follows:  Letting $\hat{x}^n$ denote the decoder's estimate of the sequence $x^n$ given codeword $\enc(x^n)$, the block error rate is defined to be $\Pr\{X^n \neq \hat{X}^n\}$.  Now, Theorem \ref{lb:ad} continues to hold for any $(r,d,n)$ code with block-error rate $\varepsilon$ by simply replacing the quantity $\binom{n}{r}$ with $(1-\varepsilon)\binom{n}{r}$.  Indeed, this follows by considering only those sequences that are correctly decoded and making the same substitution in \eqref{firstInequality} in the proof of Theorem \ref{lb:ad}. 
\vspace{-1mm}
\subsection{Connection to communication complexity}  It is known that the bit-probe model has applications to asymmetric communication complexity \cite{miltersen1995data}.  To draw an analogous   connection to our setting, consider an asymmetric communication complexity model \cite{miltersen1995data} in which Alice (the user) has $i \in [n]$, Bob (the server) has $S\subset [n]$ of size $r$, and they wish to compute the membership function
\begin{align}
f(i,S) =
\begin{cases}
1 \text{ if } i \in S \\
0 \text{ otherwise}.
\end{cases}\label{fdef}
\end{align}

We now enforce that the function $f$ must be computed under a \textsc{SpeedLimit} paradigm, which proceeds as follows. Communication starts with Bob sending a \emph{speed limit} message to Alice consisting of some $z$ bits, which limits the length of any of her messages to $z$ bits. Bob's subsequent messages consist of $1$ bit. After the initial round, Alice and Bob communicate over $d$ rounds\footnote{To be consistent with the rest of the paper, a communication round consists of one message by Alice and a response by Bob.} to evaluate $f$.
The setting arises in practice where a server imposes upload bandwidth limits on users it serves (e.g., to maintain quality or fairness of service).

Note that our scheme in Theorem \ref{ub:nonad} provides a communication protocol to compute $f$ under the \textsc{SpeedLimit} paradigm. Bob is essentially given a source sequence $x^n$, which he stores as $c(x^n)$. Alice is given the index $i$ of the source bit that must be decoded, and must do so by making queries to Bob. Bob begins by sending $\ell(\enc(x^n))$ to Alice, using $\log \ell(\enc(x^n))$ bits. Alice then sends messages $m_j$, $j\in[d]$ of $\log \ell(c(x^n))$ bits each. In response to message $m_j$, Bob sends back $c_{m_j}(x^n)$. Alice then announces $f$ to be the AND of the $d$ bits she has received from Bob.
Therefore, from Theorem \ref{ub:nonad}:
\begin{corollary}
There exists a deterministic communication protocol for computing the function $f$ as in \eqref{fdef} under the \textsc{SpeedLimit} paradigm for which the speed limit $z$ and number of communication rounds $d$ satisfy
\begin{align}
\EE\left[2^z\right] \leq 30 (rd+1)  \left((r+1)^{(r+1)} {n \choose r }\right)^{1/(rd+1)}.
\end{align}
\end{corollary}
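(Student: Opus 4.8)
The plan is to verify that the protocol sketched immediately above this corollary is admissible under the \textsc{SpeedLimit} paradigm and computes $f$ exactly, and then to read the bound on $\EE[2^z]$ directly off Theorem \ref{ub:nonad}. The first step is to identify the membership instance with a sparse source sequence: given Bob's set $S\subset [n]$ of size $r$, let $x^n$ be its indicator vector, so that $\supp(x^n)=S$ and $f(i,S)=x_i$ for every $i\in[n]$. Bob then applies the non-adaptive $(r,d,n)$ code $\enc$ furnished by Theorem \ref{ub:nonad} to $x^n$, obtaining a codeword of length $\ell \triangleq \ell(\enc(x^n))$, and sends as his speed-limit message the $z=\log \ell$ bits describing $\ell$ (so that $2^z=\ell$, with floors and ceilings suppressed in keeping with the convention used elsewhere in the paper).

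The second step is to check the paradigm's constraints and correctness simultaneously. Because the codebook is public (all sets $S_k$ and $T_{j,k}$ are known to both parties), upon learning $\ell$ Alice knows both $S_\ell$ and the probe set $T_{i,\ell}$ attached to her index $i$. If $i\notin S_\ell$ she declares $f=0$ with no further communication; otherwise she transmits the $d$ positions in $T_{i,\ell}$ one per round, each being an element of $[\ell]$ and hence describable in $z=\log\ell$ bits---exactly the speed limit. Bob replies to each query with the single codeword bit $c_{m_j}(x^n)$, matching the one-bit-per-response rule, so the exchange occupies at most $d$ rounds and is therefore admissible. Correctness is inherited verbatim from Theorem \ref{ub:nonad}: Alice's final output, the binary AND of the $d$ received bits, is precisely the decoder's declaration of $x_i$, and that decoder never errs; hence the protocol evaluates $f(i,S)$ as in \eqref{fdef} without error.

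The final step is to average. Drawing $S$ uniformly from the $\binom{n}{r}$ possibilities is the same as taking $X^n$ uniform on $\binom{[n]}{r}$, so $\EE[2^z]=\EE[\ell(\enc(X^n))]$, and substituting the achievability bound \eqref{nonAsympAch} of Theorem \ref{ub:nonad} yields the claimed inequality. I do not expect a genuine obstacle here, as this is a direct reduction rather than a fresh argument; the only points demanding care are the bookkeeping that Alice's index messages fit within the speed limit (which holds with equality by the choice $z=\log\ell$), the separate handling of the $i\notin S_\ell$ case so that the round count never exceeds $d$, and the suppression of integrality in the identification $2^z=\ell$, which is what permits the constant to remain $30$ rather than absorbing a factor of two.
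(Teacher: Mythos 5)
Your proposal is correct and follows essentially the same route as the paper: identify Bob's set with the sparse indicator vector, have Bob encode it with the Theorem \ref{ub:nonad} code and send $z=\log\ell$ as the speed limit (so $2^z=\ell$), let Alice transmit the $d$ probe positions from the public codebook and AND Bob's one-bit replies, and then read $\EE[2^z]=\EE[\ell(\enc(X^n))]$ off the achievability bound \eqref{nonAsympAch}. The extra care you take with the $i\notin S_\ell$ case and with integrality is consistent with, and slightly more explicit than, the paper's own sketch.
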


\bibliographystyle{IEEEtran}
\bibliography{research}

\appendices
\section{Supplementary Material}\label{sec:proofs}
In this section, we provide further details for the proof sketches given in Section \ref{sec:MainResults}.  For convenience, we recall the following standard inequalities which will be used repeatedly throughout the proofs without explicit mention:
\begin{align}
\left( \frac{n}{k}\right)^k \leq {n \choose k} \leq \frac{n^k}{k!} \leq \left( \frac{n e}{k}\right)^k.
\end{align}

\subsection{Details for proof of theorem \ref{lb:ad}} \label{app:thm1}
The LYM inequality used in proving the lower bound \eqref{nonAsympConverse} is given below for convenience.

\begin{lemma}[LYM inequality \cite{LYM}]
Let ${U}$ be a $u$-element set, let $\mathcal{A}$ be a family of subsets of $U$ such that no set in $\mathcal{A}$ is a subset of another set in $\mathcal{A}$, and let $a_m$ denote the number of sets of size $m$ in $\mathcal{A}$. Then
\begin{align}
\sum_{m=0}^u\frac{a_m}{{u \choose m}} \le 1. \label{LYMin}
\end{align}
\end{lemma}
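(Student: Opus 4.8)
The plan is to prove this via Lubell's double-counting argument, phrased probabilistically. The idea is to draw a uniformly random maximal chain in the Boolean lattice $2^U$ and to observe that, because $\mathcal{A}$ is an antichain, the chain can pass through at most one of its members.

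Concretely, I would draw a uniformly random permutation $(\pi_1,\dots,\pi_u)$ of the $u$ elements of $U$ and form the associated maximal chain $\emptyset\subset\{\pi_1\}\subset\{\pi_1,\pi_2\}\subset\cdots\subset U$, whose $m$-th set is $C_m=\{\pi_1,\dots,\pi_m\}$. For any fixed set $A\subseteq U$ with $|A|=m$, the event that $A$ appears on this chain is exactly the event $C_m=A$, which has probability $\tfrac{m!\,(u-m)!}{u!}=1/\binom{u}{m}$, since it requires $\pi_1,\dots,\pi_m$ to be a permutation of $A$ and $\pi_{m+1},\dots,\pi_u$ a permutation of its complement.

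The key step is to note that for two \emph{distinct} sets $A,A'\in\mathcal{A}$, the events ``$A$ on the chain'' and ``$A'$ on the chain'' are disjoint: any two sets lying on a common maximal chain are comparable, whereas the antichain hypothesis forbids $A\subseteq A'$ or $A'\subseteq A$. Hence these events, ranging over $A\in\mathcal{A}$, are pairwise disjoint, and summing their probabilities (which equals the probability of their union) gives
\begin{align}
\sum_{A\in\mathcal{A}}\frac{1}{\binom{u}{|A|}}=\sum_{A\in\mathcal{A}}\Pr[A\text{ lies on the chain}]\le 1.\notag
\end{align}
Grouping the sum according to the size $m=|A|$, and recalling that $a_m$ counts the members of $\mathcal{A}$ of size $m$, yields $\sum_{m=0}^u a_m/\binom{u}{m}\le 1$, as claimed.

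Since this is a classical result, there is no real obstacle; the only thing requiring care is the clean translation of the antichain property into disjointness of the chain-containment events, together with the exact probability computation $\Pr[C_m=A]=1/\binom{u}{m}$. Equivalently, one may run the argument by straight double counting of incidences between maximal chains (of which there are $u!$) and members of $\mathcal{A}$ (each $m$-set lying on exactly $m!\,(u-m)!$ of them), arriving at $\sum_{A\in\mathcal{A}}|A|!\,(u-|A|)!\le u!$ and dividing through by $u!$.
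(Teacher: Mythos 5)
Your proof is correct. The paper does not actually prove this lemma --- it simply states the LYM inequality and cites it from the literature --- so there is no in-paper argument to compare against. Your Lubell-style argument via uniformly random maximal chains is the standard proof: the probability computation $\Pr[C_m=A]=m!\,(u-m)!/u!=1/\binom{u}{m}$ is exact, the antichain hypothesis correctly translates into pairwise disjointness of the chain-containment events (since any two sets on a common maximal chain are comparable), and summing the probabilities of disjoint events bounds the total by $1$. Grouping by set size recovers the stated form. The equivalent deterministic double count of chain--set incidences you mention at the end is the same argument in different clothing. No gaps.
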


The second key inequality was Lemma \ref{lem:techEstimate}, which is restated below for convenience.

\noindent {\bf Lemma \ref{lem:techEstimate}.}\emph{
For all $M,v\geq 1$, 
\begin{align}
\sum_{k=1}^M \max_{i \leq v} {{2k} \choose i } \leq 
2^{v} \frac{\left(M+2 + \frac{v+1}{2e}\right)^{v+1}}{(v+1)!}. \label{eqn:KeyConverseLemma2}%
\end{align}
}
\begin{IEEEproof}[Proof of Lemma  \ref{lem:techEstimate}]
We begin the proof  by splitting the bound into two cases:
\begin{proposition}\label{prop:TwoBounds}
For $v\geq 1$
\begin{align}
\sum_{k=1}^M \max_{i \leq v} {{2k} \choose i } \leq 
\begin{cases}
\frac{3}{2} {2M \choose M} & \mbox{for $1 \leq M \leq v$}\\
 2^{v} \frac{(M+2)^{v+1}}{(v+1)!} & \mbox{for $M \geq v+1$}.
\end{cases}
\end{align}
\end{proposition}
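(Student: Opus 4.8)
The plan is to treat the two regimes separately, after first reducing the quantity $\max_{i\le v}\binom{2k}{i}$ to an explicit binomial coefficient. Since $\binom{2k}{i}$ is unimodal in $i$ with its peak at $i=k$, the maximum over $i\le v$ equals the central coefficient $\binom{2k}{k}$ whenever $k\le v$, and equals $\binom{2k}{v}$ whenever $k>v$ (because then the admissible range $i\le v$ lies entirely on the increasing part of the profile). This dichotomy is exactly what separates the two cases of the proposition.

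For the first case, $1\le M\le v$, every index $k$ in the sum satisfies $k\le M\le v$, so each summand is the central coefficient and the sum is $\sum_{k=1}^M\binom{2k}{k}$. I would bound this by a geometric series anchored at the final (largest) term. The key computation is the exact ratio $\binom{2k}{k}/\binom{2(k-1)}{k-1}=\frac{2(2k-1)}{k}=4-\frac{2}{k}\ge 3$ for all $k\ge 2$. Reading the sum backwards from $k=M$, consecutive terms therefore shrink by a factor of at least $3$, giving $\sum_{k=1}^M\binom{2k}{k}\le\binom{2M}{M}\sum_{j\ge 0}3^{-j}=\tfrac32\binom{2M}{M}$, as claimed.

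For the second case, $M\ge v+1$, I would split the sum at $k=v$ and then rewrite it as $\sum_{k=1}^M\binom{2k}{v}+\sum_{k=1}^v\big[\binom{2k}{k}-\binom{2k}{v}\big]$. The first sum is controlled using $\binom{2k}{v}\le (2k)^v/v!$ together with the integral comparison $\sum_{k=1}^M k^v\le\int_1^{M+1}x^v\,dx\le (M+1)^{v+1}/(v+1)$, yielding $\sum_{k=1}^M\binom{2k}{v}\le 2^v(M+1)^{v+1}/(v+1)!$. For the correction term I would simply drop the subtracted coefficients and invoke the first case with $M\leftarrow v$, so that $\sum_{k=1}^v\big[\binom{2k}{k}-\binom{2k}{v}\big]\le\sum_{k=1}^v\binom{2k}{k}\le\tfrac32\binom{2v}{v}$.

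It remains to absorb this correction into the target bound, and this is the step where the slightly loose $(M+2)$ rather than $(M+1)$ in the statement is essential. It suffices to show $\tfrac32\binom{2v}{v}\le 2^v\big[(M+2)^{v+1}-(M+1)^{v+1}\big]/(v+1)!$. By the mean value theorem applied to $x\mapsto x^{v+1}$ together with the hypothesis $M+1\ge v+2$, the bracketed difference is at least $(v+1)(M+1)^v\ge (v+1)(v+2)^v$, so the right-hand side is at least $2^v(v+2)^v/v!$; meanwhile $\tfrac32\binom{2v}{v}\le\tfrac32\cdot 2^vv^v/v!$. The required inequality thus reduces to $(1+2/v)^v\ge\tfrac32$, which follows from Bernoulli's inequality (indeed the left side is at least $3$). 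I expect this final reconciliation to be the main obstacle: the central coefficients $\binom{2k}{k}$ for $k\le v$ are genuinely larger than the corresponding $\binom{2k}{v}$ and cannot be folded into the polynomial estimate directly, so the argument hinges on showing that their total mass is dominated by the single-step increment of $(M+2)^{v+1}$ over $(M+1)^{v+1}$.
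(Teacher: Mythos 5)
Your proof is correct and follows essentially the same strategy as the paper's: the same case split, the same ratio bound $\binom{2k}{k}/\binom{2(k-1)}{k-1}\ge 3$ (phrased as a geometric series rather than induction) for $M\le v$, and the same $\binom{2k}{v}\le (2k)^v/v!$ plus integral-comparison estimate for $M\ge v+1$. The only substantive difference is how the head $\sum_{k\le v}\binom{2k}{k}$ is absorbed in the second case --- the paper folds it into the tail sum by extending the range to $\sum_{k=v}^{M+1}\binom{2k}{v}$, whereas you charge it to the increment $(M+2)^{v+1}-(M+1)^{v+1}$ via the mean value theorem and Bernoulli --- and both variants exploit exactly the same $M+2$ versus $M+1$ slack, so your reconciliation step goes through.
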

\begin{proof}
Note that if $M\leq v$, then 
\begin{align}
\sum_{k=1}^M \max_{i \leq v} {{2k} \choose i } =\sum_{k=1}^M  {{2k} \choose k }. 
\end{align}
We will prove by induction on $M$ that $\sum_{k=1}^M  {{2k} \choose k }\leq \frac{3}{2}{{2M} \choose M}$.    The base case $M=1$ is trivial, so by the inductive hypothesis, we have
\begin{align}
\sum_{k=1}^{M+1}  {{2k} \choose k } \leq {{2(M+1)}\choose M+1} + \frac{3}{2}  {{2M} \choose M }.
\end{align}
However, we can  write
\begin{align}
{{2(M+1)}\choose M+1} = \frac{(2M+2)(2M+1)}{(M+1)^2} {{2M} \choose M }\geq 3  {{2M} \choose M },\notag
\end{align}
completing the proof of the first claim.

Next, if $M\geq v+1$, then 
\begin{align}
\sum_{k=1}^{M} \max_{i\leq v}{{2k} \choose i}  
&=\sum_{k=1}^{v}  {2k \choose k} +  \sum_{k=v+1}^{M} {2k \choose v}\\
&\leq \sum_{k=v}^{M+1} {2k \choose v}\\
&\leq \sum_{k=v}^{M+1} \frac{(2k)^v}{v!}\\
&\leq  \frac{2^{v}}{ v!} \int_{0}^{M+2} z^{v} dz \\
&\leq 2^{v} \frac{  ( M + 2 )^{ v  + 1}   }{(v+1)!},%
\end{align}
establishing the second claim.
\end{proof}

In light of Proposition \ref{prop:TwoBounds}, it is sufficient to check that 
\begin{align}
{(3(v+1)!)}^{1/(v+1)} {2M \choose M}^{1/(v+1)} \leq 2 \left(M+2 + \frac{v+1}{2e}\right)  \label{UBtoShow}
\end{align}
for $1\leq M \leq v$ in order to prove \eqref{eqn:KeyConverseLemma2}.  Since 
\begin{align}
{(3(v+1)!)}^{1/(v+1)} {2M \choose M}^{1/(v+1)} \!\!\! \leq 3^{1/(v+1)} \frac{v+1}{e} \left(2e\right)^{M/(v+1)}\label{cvxRHS}
\end{align}
and the RHS in \eqref{cvxRHS} is a convex function in $M$, we can show \eqref{UBtoShow} holds by verifying 
\begin{align}
3^{1/(v+1)} \frac{v+1}{e} \left(2e\right)^{M/(v+1)} \leq 2 \left(M+2 + \frac{v+1}{2e}\right) 
\end{align}
for $M=0$ and $M=v+1$.  It is straightforward to check that this is the case.
\end{IEEEproof}

\subsection{Details for proof of Theorem \ref{ub:nonad}} \label{app:thm2}

\begin{figure}%
\center
\scalebox{0.65}{
\begin{tikzpicture}
  \tikzstyle{bigarrow}=[decoration={markings,mark=at position 0.999 with
    {\arrow[scale=2]{stealth}}}, postaction={decorate}, shorten >=0.4pt]
  
  \draw[fill=gray] (0,-1) rectangle (1,-2);
  \draw[fill=gray] (0,-2) rectangle (1,-3);
  \draw[rounded corners, thick] (0,0) rectangle (1,-12);
  \foreach \y in {-1,...,-11} \draw (0, \y) -- (1, \y);

  \begin{scope}[xshift = 1.5cm]
  \draw[fill=gray] (0,-1) rectangle (1,-2);
  \draw[fill=gray] (0,-4) rectangle (1,-5);
  \draw[rounded corners, thick] (0,0) rectangle (1,-12);
  \foreach \y in {-1,...,-11} \draw (0, \y) -- (1, \y);
  \end{scope}
  
  \begin{scope}[xshift = 3cm]
  \draw[fill=gray] (0,-1) rectangle (1,-2);
  \draw[fill=gray] (0,-5) rectangle (1,-6);
  \draw[rounded corners, thick] (0,0) rectangle (1,-12);
  \foreach \y in {-1,...,-11} \draw (0, \y) -- (1, \y);
  \end{scope}
  
  \begin{scope}[xshift = 4.5cm]
  \draw[fill=gray] (0,-2) rectangle (1,-3);
  \draw[fill=gray] (0,-4) rectangle (1,-5);
  \draw[rounded corners, thick] (0,0) rectangle (1,-12);
  \foreach \y in {-1,...,-11} \draw (0, \y) -- (1, \y);
  \end{scope}
  
  \begin{scope}[xshift = 6cm]
  \draw[fill=gray] (0,-2) rectangle (1,-3);
  \draw[fill=gray] (0,-5) rectangle (1,-6);
  \draw[rounded corners, thick] (0,0) rectangle (1,-12);
  \foreach \y in {-1,...,-11} \draw (0, \y) -- (1, \y);
  \end{scope}
  
  \begin{scope}[xshift = 7.5cm]
  \draw[fill=gray] (0,-4) rectangle (1,-5);
  \draw[fill=gray] (0,-5) rectangle (1,-6);
  \draw[rounded corners, thick] (0,0) rectangle (1,-12);
  \foreach \y in {-1,...,-11} \draw (0, \y) -- (1, \y);
  \end{scope}
  
  \begin{scope}[xshift = 9.5cm]
  \draw[dashed] (0,-1) rectangle (1,-2);
  \begin{scope}[yshift = -1cm]
	\draw[dashed] (0,-1) rectangle (1,-2);
  \end{scope}
  \begin{scope}[yshift = -3cm]
	\draw[dashed] (0,-1) rectangle (1,-2);
  \end{scope}
  \begin{scope}[yshift = -4cm]
	\draw[dashed] (0,-1) rectangle (1,-2);
  \end{scope}
  \end{scope}

  \begin{scope}[xshift = 2cm, yshift = -14cm]
  \begin{scope}[xshift = 0cm]
  \end{scope}
  \draw[fill=gray] (1,0) rectangle (2,-1);
  \draw[fill=gray] (2,0) rectangle (3,-1);
  \draw[fill=gray] (3,0) rectangle (4,-1);
  \draw[fill=gray] (5,0) rectangle (6,-1);
  \draw[fill=gray] (6,0) rectangle (7,-1);
  \draw[fill=gray] (7,0) rectangle (8,-1);
  \draw[rounded corners, thick] (0,0) rectangle (10,-1);
  \foreach \x in {1,...,9} \draw (\x, 0) -- (\x, -1);
  \end{scope}  
  
  \begin{scope}[xshift = 2cm, yshift = -15.5cm]
  \begin{scope}[xshift = 0cm]
  \end{scope}
  \draw[fill=gray] (1,0) rectangle (2,-1);
  \draw[fill=gray] (3,0) rectangle (4,-1);
  \draw[fill=gray] (4,0) rectangle (5,-1);
  \draw[fill=gray] (5,0) rectangle (6,-1);
  \draw[fill=gray] (6,0) rectangle (7,-1);
  \draw[fill=gray] (7,0) rectangle (8,-1);
  \draw[rounded corners, thick] (0,0) rectangle (10,-1);
  \foreach \x in {1,...,9} \draw (\x, 0) -- (\x, -1);
  \end{scope}
  
  \begin{scope}[xshift = 2cm, yshift = -17cm]
  \begin{scope}[xshift = 0cm]
  \end{scope}
  \draw[fill=gray] (2,0) rectangle (3,-1);
  \draw[fill=gray] (3,0) rectangle (4,-1);
  \draw[fill=gray] (4,0) rectangle (5,-1);
  \draw[fill=gray] (5,0) rectangle (6,-1);
  \draw[fill=gray] (6,0) rectangle (7,-1);
  \draw[fill=gray] (7,0) rectangle (8,-1);
  \draw[rounded corners, thick] (0,0) rectangle (10,-1);
  \foreach \x in {1,...,9} \draw (\x, 0) -- (\x, -1);
  \end{scope}
    
  \node[scale=1.5] at (0.5,0.5) {$x^n_{(1)}$};
  \node[scale=1.5] at (2,0.5) {$x^n_{(2)}$};
  \node[scale=1.5] at (3.5,0.5) {$x^n_{(3)}$};
  \node[scale=1.5] at (5,0.5) {$x^n_{(4)}$};
  \node[scale=1.5] at (6.5,0.5) {$x^n_{(5)}$};
  \node[scale=1.5] at (8,0.5) {$x^n_{(6)}$};
  \node[scale=1.5] at (10,0.5) {$S_k$};
  \node[scale=1.5] at (12,0.5) {$T_{j,k}$};
  
  \node[scale=1.2] at (1,-14.5) {$\enc(x^n_{(3)})$};
  \node[scale=1.2] at (1,-16) {$\enc(x^n_{(5)})$};
   \node[scale=1.2] at (1,-17.5) {$\enc(x^n_{(6)})$};
    
  \node[scale=1.2] at (12,-1.5) {$\{2,3,4\}$};
    \node[scale=1.2] at (12,-2.5) {$\{2,4,5\}$};
      \node[scale=1.2] at (12,-4.5) {$\{3,4,5\}$};
        \node[scale=1.2] at (12,-5.5) {$\{6,7,8\}$};

\end{tikzpicture}
}
\caption{Encoding of  $2$-sparse source sequences of length $n=12$ using a codeword of length $k=10$, with at most $d=3$ bit probes. As a result, $|S_k| = \frac{r}{r+1} {\binom{k}{d}}/{\binom{rd}{d}} = 4$. The source sequences $x^n_{(1)}, \ldots, x^n_{(6)}$ represent those that satisfy condition {(C1)} of the encoding criterion for $S_k = \{2,3,5,6\}$. Condition {(C2)} is only satisfied by $x^n_{(3)}, x^n_{(5)}, x^n_{(6)}$, which are encoded as shown. If the bit to be decoded $j \in [n] \setminus S_k = \{1,4,7,8,9,10,11,12\}$, then the decoder outputs $0$ without probing the bits of the codeword. If $j\in S_k$, then the decoder probes positions $T_{j,k}$ of the codeword and returns the AND of the bits (shaded blocks correspond to 1's, unshaded blocks signify 0's). } \label{fig:r2}
\end{figure}

The random encoding scheme described in Section \ref{sec:explength} is illustrated in Figure \ref{fig:r2}.  The only thing remaining is to analyze the performance of this scheme:

{\bf Performance Analysis:}   
 To show a bound on the expected codeword length, fix an arbitrary sequence $x^n$ and define the events
\begin{align*}
\mathcal{E}_{k,1} &= \{ \mathsf{supp}(x^n)\subseteq S_k \}\\
\mathcal{E}_{k,2} &= \{ T_{j,k}\not\subseteq \cup_{ i \in \mathsf{supp}(x^n) } T_{i,k} \mbox{~~for all~~} j \in S_k \setminus \mathsf{supp}(x^n) \}.
\end{align*}
By independence of the sets used in the codebook construction, we have 
\begin{align}
&\EE_{\mathcal{C}}\left[ \ell(\mathsf{c}(x^n))\right] \label{codebookEnsemble}\\
&= \sum_{k\geq rd+1} k \Pr\{ \mathcal{E}_{k,1} \cap \mathcal{E}_{k,2}\} \prod_{j=rd+1}^{k-1}\left(1  - \Pr\{ \mathcal{E}_{j,1} \cap \mathcal{E}_{j,2}\} \right),\notag
\end{align} 
where $\EE_{\mathcal{C}}\left[ \cdot \right]$ denotes expectation over the ensemble of random codebooks.  
Importantly, we note that \eqref{codebookEnsemble} is a decreasing function of $\Pr\{ \mathcal{E}_{k,1} \cap \mathcal{E}_{k,2}\}$ for each $k$.  Therefore, in order to upper bound \eqref{codebookEnsemble}, we will  lower bound  $\Pr\{ \mathcal{E}_{k,1} \cap \mathcal{E}_{k,2}\}$.  To that end, observe that 
\begin{align}
\Pr\{ \mathcal{E}_{k,1} \cap \mathcal{E}_{k,2}\} &= \Pr\{ \mathcal{E}_{k,1} \} \Pr\{ \mathcal{E}_{k,2} | \mathcal{E}_{k,1}\} \\
&= \frac{ {{|S_k|} \choose r }  }{  {n \choose r} } \Pr\{ \mathcal{E}_{k,2} | \mathcal{E}_{k,1}\},
\end{align}
where the conditional probability $\Pr\{ \mathcal{E}_{k,2} | \mathcal{E}_{k,1}\}$ can be bounded from below by a simple union bound:
\begin{align}
\Pr\{ \mathcal{E}_{k,2} | \mathcal{E}_{k,1}\} &\geq 1 -\!\!\!\!\! \sum_{ j \in S_k \setminus \mathsf{supp}(x^n) }  \Pr\{  T_{j,k} \subseteq \cup_{ i \in \mathsf{supp}(x^n) } T_{i,k}  \}\notag\\
&\geq 1 -  |S_k|  \frac{{rd \choose d} }{  {k \choose d}  }.
\end{align}
Therefore, we have
\begin{align}
\Pr\{ \mathcal{E}_{k,1} \cap \mathcal{E}_{k,2}\} &\geq  \frac{1}{{n \choose r}} {{|S_k|} \choose r } \left(   1 -  |S_k|  \frac{{rd \choose d} }{  {k \choose d}  } \right) \\
&=   \frac{1}{{n \choose r}} \frac{1}{r+1}\binom{\frac{r}{r+1}\frac{\binom{k}{d}}{\binom{rd}{d}}}{r}.\label{qtyToBound}
\end{align}
The challenge of the proof is to now carefully bound \eqref{qtyToBound} and \eqref{codebookEnsemble}.
Toward this goal, we further bound $\Pr\{ \mathcal{E}_{k,1} \cap \mathcal{E}_{k,2}\}$ as follows:
\begin{align}
{n \choose r}  \Pr\{ \mathcal{E}_{k,1} \cap \mathcal{E}_{k,2}\} 
&\geq \frac{1}{r+1}\binom{\frac{r}{r+1}\frac{\binom{k}{d}}{\binom{rd}{d}}}{r} \\
&\geq \frac{1}{(r+1)^{r+1}}\left( \frac{\binom{k}{d}}{\binom{rd}{d}}  \right)^r\\
&\geq \frac{1}{(r+1)^{r+1}}\left( \frac{ \left(\frac{k}{d}\right)^d }{\left(\frac{rde}{d}\right)^d}  \right)^r\\
 &=\frac{k^{rd}}{(r+1)^{r+1}(e rd) ^{rd}}. %
\end{align}
Hence, 
\begin{align}
\Pr\{ \mathcal{E}_{k,1} \cap \mathcal{E}_{k,2}\} \geq C_{r,d}  {k^{rd}} ,\label{defnCrd}
\end{align}
where we have defined $C_{r,d} = \left({n \choose r }  (r+1)^{r+1}(e rd) ^{rd} \right)^{-1}$ for convenience.  

Now, using the inequality $(1-x) \leq e^{-x}$, we can upper bound \eqref{codebookEnsemble} with \eqref{defnCrd} as
\begin{align}
&\EE_{\mathcal{C}}\left[ \ell(\mathsf{c}(x^n))\right]  \notag\\
&\leq  \sum_{k\geq rd+1} C_{r,d}  {k^{rd+1}} \exp\left(-\sum_{j=rd+1}^{k-1} C_{r,d}  {j^{rd}} \right) \\
&\leq  \sum_{k\geq rd+1} C_{r,d}  {k^{rd+1}} \exp\left(- C_{r,d} \int_{rd+1}^{k-1} z^{rd} dz   \right) \\
&= \sum_{k\geq rd+1} \Bigg[ C_{r,d}  {k^{rd+1}}  \times \notag \\
&~~~\exp\left(- C_{r,d}\frac{\left((k-1)^{rd+1}-(rd+1)^{rd+1} \right)}{rd+1}   \right) \Bigg]\\
&=\exp\left( C_{r,d} (rd+1)^{rd}   \right)  \times \notag\\
&~~~\sum_{k\geq rd+1} C_{r,d}   {k^{rd+1}}  \exp\left(- C_{r,d}\frac{(k-1)^{rd+1} }{rd+1}   \right) \notag\\
&\leq  \exp\left( C_{r,d} (rd+1)^{rd}  +\frac{rd+1}{rd} \right) \times \notag\\
&~~~ \sum_{k\geq rd+1} C_{r,d}  {(k-1)^{rd+1}} \exp\left(- C_{r,d}\frac{(k-1)^{rd+1} }{rd+1}   \right)\\
&=   \exp\left( C_{r,d} (rd+1)^{rd}  +\frac{rd+1}{rd} \right)  \times \notag\\
&~~~(rd+1) \sum_{k\geq rd} C_{r,d}  \frac{k^{rd+1}}{rd+1} \exp\left(- C_{r,d}\frac{k^{rd+1} }{rd+1}   \right).\label{sumToBound} %
\end{align}
Since the function $u e^{-u}$ is monotone increasing on $(0,1)$ and monotone decreasing on $(1,\infty)$ with a maximum of $1/e$, we can bound the sum in \eqref{sumToBound} as 
\begin{align}
&\sum_{k=rd}^{\infty} C_{r,d}  \frac{k^{rd+1}}{rd+1} \exp\left(- C_{r,d}\frac{k^{rd+1} }{rd+1}   \right) \notag \\
&\leq 2/e +  \int_{0}^{\infty} C_{r,d}   \frac{ z^{rd+1}}{rd+1  } \exp\left(- C_{r,d} \frac{ z^{rd+1}}{rd+1 }  \right) dz. \label{intToBound}
\end{align}

The integral in \eqref{intToBound} can be bounded as follows:
\begin{lemma}\label{lem:BoundIntegral}
\begin{align*}
&(rd+1) \int_{0}^{\infty} C_{r,d}   \frac{ z^{rd+1}}{rd+1  } \exp\left(- C_{r,d} \frac{ z^{rd+1}}{rd+1  }  \right) dz  \\
&\quad\leq  \left(\frac{rd+1}{C_{r,d}}\right)^{1/(rd+1)} . 
\end{align*}
\end{lemma}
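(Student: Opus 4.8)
The plan is to evaluate the integral in closed form by a change of variables, recognize the result as a Gamma function, and then bound that Gamma value by a constant. To lighten notation, write $m \triangleq rd+1$ and $C \triangleq C_{r,d}$, so that the quantity to be bounded is
\begin{align}
I \triangleq m\int_0^\infty C\,\frac{z^m}{m}\exp\!\left(-\frac{C z^m}{m}\right)dz = C\int_0^\infty z^m \exp\!\left(-\frac{C z^m}{m}\right)dz. \notag
\end{align}

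First I would substitute $s = C z^m / m$, under which $ds = C z^{m-1}\,dz$, so that $z^m\,dz = z\,ds/C$ with $z = (ms/C)^{1/m}$. This convenient simplification (which avoids having to track the exponent $(m-1)/m$ explicitly) transforms the integral into
\begin{align}
I = \int_0^\infty \left(\frac{ms}{C}\right)^{1/m} e^{-s}\,ds = \left(\frac{m}{C}\right)^{1/m}\int_0^\infty s^{1/m} e^{-s}\,ds = \left(\frac{m}{C}\right)^{1/m}\Gamma\!\left(1+\tfrac{1}{m}\right), \notag
\end{align}
where $\Gamma$ denotes the usual Gamma function and the last equality is the definition $\Gamma(\alpha) = \int_0^\infty s^{\alpha-1}e^{-s}\,ds$ with $\alpha = 1 + 1/m$.

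It then remains only to observe that $m\geq 1$ forces $1 + 1/m \in (1,2]$, and on this range $\Gamma$ is at most $1$. This follows from the log-convexity of $\Gamma$: since $\log\Gamma$ is convex and $\log\Gamma(1) = \log\Gamma(2) = 0$, the chord bound gives $\log\Gamma(x)\leq 0$, i.e.\ $\Gamma(x)\leq 1$, for all $x\in[1,2]$. Hence $\Gamma(1+1/m)\leq 1$, and substituting $m = rd+1$ and $C = C_{r,d}$ back in yields exactly
\begin{align}
I \leq \left(\frac{m}{C}\right)^{1/m} = \left(\frac{rd+1}{C_{r,d}}\right)^{1/(rd+1)}, \notag
\end{align}
as claimed.

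I do not anticipate a genuine obstacle here; the argument is essentially a one-line substitution together with a standard property of $\Gamma$. The only step deserving explicit comment is the bound $\Gamma(1+1/m)\leq 1$, and even that is immediate from log-convexity as noted above.
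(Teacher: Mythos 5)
Your proof is correct and follows essentially the same route as the paper's: the same change of variables (written in the inverse direction, $z = (ms/C)^{1/m}$ versus your $s = Cz^m/m$), the same identification of the resulting integral as $\left(\tfrac{m}{C}\right)^{1/m}\Gamma\!\left(1+\tfrac{1}{m}\right)$, and the same final bound $\Gamma(x)\leq 1$ on $[1,2]$, which the paper asserts without proof and you justify via log-convexity.
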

\begin{proof}
Abbreviating $C:=C_{r,d}$, consider the change of variables $z = \Big(\frac{u (rd+1)}{C}\Big)^{1/(rd+1)}$.  Then,
\begin{align}
&(rd+1)\int_{0}^{\infty} C   \frac{ z^{rd+1}}{rd+1  } \exp\left(- C \frac{ z^{rd+1}}{rd+1  }  \right) dz\\
&= (rd+1)\int_{0}^{\infty} u  e^{-u} dz\\
&= (rd+1)\int_{0}^{\infty} \!u  e^{-u} \!\left( \frac{u^{-rd/(rd + 1) }}{rd+1} \left(\frac{rd+1}{C}\right)^{1/(rd+1)} du \!\right)\notag \\
&=  \left(\frac{rd+1}{C}\right)^{1/(rd+1)}  \int_{0}^{\infty} u^{1/(rd+1)}  e^{-u}  du\\
&=   \left(\frac{rd+1}{C}\right)^{1/(rd+1)}  \Gamma\left( \frac{rd+2}{rd+1}\right)\\
&= \left(\frac{rd+1}{C}\right)^{1/(rd+1)}    \Gamma\left( \frac{rd+2}{rd+1}\right)\notag \\
&\leq \left(\frac{rd+1}{C}\right)^{1/(rd+1)}  , \label{gammaBound}
\end{align}
where \eqref{gammaBound} follows because $\Gamma(x)\leq 1$ for $x\leq 2$.
\end{proof}

To finish the proof, we use Lemma \ref{lem:BoundIntegral} and the integral upper bound on  \eqref{sumToBound} to conclude 
\begin{align}
&\EE_{\mathcal{C}}\left[ \ell(\mathsf{c}(x^n))\right]  \notag\\
&\leq 2 (rd+1) \exp\left( C_{r,d} (rd+1)^{rd}  +\frac{1}{rd} \right) +\notag \\
& ~~~\exp\left( C_{r,d} (rd+1)^{rd}  +\frac{rd+1}{rd} \right) \left(\frac{rd+1}{C_{r,d}}\right)^{1/(rd+1)} \\
&\leq   30 (rd+1)  \left((r+1)^{(r+1)} {n \choose r }\right)^{1/(rd+1)}   \label{lastline}, 
\end{align}
where \eqref{lastline} is a (loose) upper bound assuming $rd\geq 1$,  which follows from elementary algebra and the definition of $C_{r,d}$.

Since all sequences in ${{[n]} \choose r}$ are equally probable, linearity of expectation ensures the existence of a code $\enc$ which satisfies \eqref{nonAsympAch} as desired.

\begin{remark}
We have made no significant attempt to optimize the multiplicative constant in \eqref{nonAsympAch}.  In general, the given proof reveals that this factor of 30 can be replaced by a function of $rd$ that is bounded by 30 when $rd=1$ and is upper bounded by $2+e$ as $rd$ grows large.  We conjecture that the achievability scheme proposed in the proof of Theorem \ref{ub:nonad} can yield a multiplicative factor as small as $\Gamma\left(\frac{rd+2}{rd+1}\right)$, which is strictly less than 1,  at the expense of more careful intermediate bounds. %
\end{remark}

\subsection{Proof of Theorem \ref{thm:compOptimalNecessary}}

Suppose $\{\enc_n\}$ is a competitively optimal sequence of $(r_n,d_n,n)$-locally decodable codes, and define $\epsilon_n$ according to
\begin{align}
r_n d_n +1 = \epsilon_n \log {n \choose r_n}.
\end{align}
  Then Theorem \ref{lb:ad} and the definition of competitive optimality imply that there is some constant $K$ for which 
\begin{align}
K &\geq \frac{\EE[\ell(\enc_n(X^n))]}{\log {n \choose r_n}}\\
&\geq {\epsilon_n }\left( {n \choose r_n}^{1/(\epsilon_n \log{n \choose r_n})}-1 \right)\\
&= {\epsilon_n }\left(2^{1/\epsilon_n}-1 \right)
\end{align}  
for all $n$ sufficiently large.
Since $\epsilon({2^{1/\epsilon}-1})\nearrow \infty$ as $\epsilon\searrow 0$, this implies that there is a constant $K'>0$ such that $\epsilon_n\geq K'$ for all $n$ sufficiently large, proving the  claim.

\subsection{Proof of Corollary \ref{cor:CompOptimalSparse}}

To prove the ``only if" direction, note that Theorem \ref{thm:compOptimalNecessary} asserts that there must be a constant $K>0$ such that 
\begin{align}
r_n d_n \geq K \log {n \choose r_n } \geq K r_n \log {n \over r_n }.
\end{align}
Thus, by the assumption that $r_n = O(n^{1-\epsilon})$ for some $\epsilon>0$, 
\begin{align}
d_n \geq K \log {n \over r_n } = \Omega (\log n).
\end{align}

To prove the ``if" direction, suppose there are positive constants $K,\epsilon$ such that for $n$ sufficiently large
\begin{align}
d_n &\geq K \log {n }\\
r_n &\leq  n^{1-\epsilon}.
\end{align}
Since reducing $d_n$ can only adversely affect performance, we can assume without loss of generality that 
\begin{align}
d_n +1 \leq  2 K \log {n } \leq \frac{2 K}{\epsilon} \log{n \over r_n}
\end{align}
for $n$ sufficiently large.  As a consequence, we have
\begin{align}
r_n d_n n^{r_n/(r_n d_n +1)} &\leq r_n d_n n ^{1/( d_n +1)} \notag\\
&\leq  2^{1/2K} \frac{2 K}{\epsilon} r_n \log{n \over r_n} \\
&\leq 2^{1/2K} \frac{2 K}{\epsilon} \log {n \choose r_n}.
\end{align}
An application of  Theorem \ref{ub:nonad} completes the proof.

\end{document}